\documentclass[sigconf]{acmart}

\usepackage[ruled, vlined, linesnumbered]{algorithm2e}





\def\header{\vspace{1.5mm} \noindent}

\def\figcapup{\vspace{-1mm}}



\newtheorem{Lemma}{Lemma}


\def\api{\hat{\pi}}

\def\nbr{N}
\def\outnbr{N_{out}}
\def\innbr{N_{in}}
\def\few{{\text{Distributed-FAPPR}}}

\def\distppr{{\it DistPPR}}
\def\doubling{{\it Doubling}}
\def\gxppr{{\it GXPPR}}

\def\*{\star}


\def\done{\hspace*{\fill} $\framebox[2mm]{}$}	


\def\figcapup{\vspace{-2mm}}

\copyrightyear{2019}
\acmYear{2019} 
\setcopyright{iw3c2w3}
\acmConference[WWW '19]{Proceedings of the 2019 World Wide Web Conference}{May 13--17, 2019}{San Francisco, CA, USA}
\acmBooktitle{Proceedings of the 2019 World Wide Web Conference (WWW '19), May 13--17, 2019, San Francisco, CA, USA}
\acmPrice{}
\acmDOI{10.1145/3308558.3313555}
\acmISBN{978-1-4503-6674-8/19/05}

\fancyhead{}

\title{Distributed Algorithms for Fully Personalized PageRank on Large Graphs}


\author{Wenqing Lin}
\affiliation{Interactive Entertainment Group, Tencent Inc.}
\email{edwlin@tencent.com}

\begin{document}
\begin{sloppy}

\begin{abstract}
	Personalized PageRank (PPR) has enormous applications, such as link prediction and recommendation systems for social networks, which often require the fully PPR to be known. Besides, most of real-life graphs are edge-weighted, e.g., the interaction between users on the Facebook network. However, it is computationally difficult to compute the fully PPR, especially on large graphs, not to mention that most existing approaches do not consider the weights of edges. In particular, the existing approach cannot handle graphs with billion edges on a moderate-size cluster. To address this problem, this paper presents a novel study on the computation of fully edge-weighted PPR on large graphs using the distributed computing framework. Specifically, we employ the Monte Carlo approximation that performs a large number of random walks from each node of the graph, and exploits the parallel pipeline framework to reduce the overall running time of the fully PPR. Based on that, we develop several optimization techniques which (i) alleviate the issue of large nodes that could explode the memory space, (ii) pre-compute short walks for small nodes that largely speedup the computation of random walks, and (iii) optimize the amount of random walks to compute in each pipeline that significantly reduces the overhead. With extensive experiments on a variety of real-life graph datasets, we demonstrate that our solution is several orders of magnitude faster than the state-of-the-arts, and meanwhile, largely outperforms the baseline algorithms in terms of accuracy.
\end{abstract}

\begin{CCSXML}
<ccs2012>
<concept>
<concept_id>10002951.10003260.10003282.10003292</concept_id>
<concept_desc>Information systems~Social networks</concept_desc>
<concept_significance>500</concept_significance>
</concept>
<concept>
<concept_id>10002951.10003260.10003261.10003271</concept_id>
<concept_desc>Information systems~Personalization</concept_desc>
<concept_significance>300</concept_significance>
</concept>
<concept>
<concept_id>10010147.10010169.10010170.10003817</concept_id>
<concept_desc>Computing methodologies~MapReduce algorithms</concept_desc>
<concept_significance>300</concept_significance>
</concept>
</ccs2012>
\end{CCSXML}

\ccsdesc[500]{Information systems~Social networks}
\ccsdesc[300]{Information systems~Personalization}
\ccsdesc[300]{Computing methodologies~MapReduce algorithms}

\keywords{Distributed Algorithm; Graph; Personalized PageRank}

\maketitle

\section{Introduction} \label{sec:introduction}

Given a graph $G=(V,E,w)$, and two nodes $u, v \in V$, the personalized PageRank (PPR) of $v$ with respect to $u$, denoted by $\pi(u, v)$, is defined as the probability of a random walk starting from $u$ and ending at $v$. Compared to PageRank (PR), which is a global metric of nodes, PPR depicts the asymmetric perspective of each node to all the other nodes, i.e., $\pi(u,v)$ is not necessarily equal to $\pi(v,u)$. There are a plethora of applications that utilize PPR to measure the proximity of two nodes in the graph, such as recommendation system \cite{ppte15,cpjy+18,lxlk19} and natural language processing \cite{ea09}. In these applications, the fully PPR, i.e., the PPRs of all pairs of nodes, are often required. In other words, for all node $v$ in the graph, the PPR of all the other nodes $u$ with respect to $v$ should be known.

However, most of previous work \cite{qzjj16,txgj+17,jnlu17} focus on the computation of single-pair or single-source PPR, rendering them unsuitable for solving the problem of fully PPR. For example, FORA \cite{srxz+17} is the recent algorithm for single-source PPR, and needs 103 seconds to answer a single-source query on a Twitter graph of 41.7 millions of nodes, which would require about 137 years to compute the single-source PPRs for all nodes. In addition, it is highly difficult to compute the fully PPR, especially on large graphs. In particular, the probability of a random walk is usually computed by the Power Iteration method \cite{ymth+12} or approximated by the Monte Carlo method  \cite{dbkt05}, which might require a large number of iterations of computation to obtain a converged results. Besides, most of existing methods are tailored for the setting of a single machine \cite{dbkt05}, which renders them unable to handle graphs that cannot fit in memory, not to mention that the space complexity of the fully PPR could be $O(n^2)$ where $n$ is the number of nodes in the graph.
What's more, most real-world graphs are weighted, which can be directly generated from the interactions between nodes. For instance, the weight of the edge from a user $u$ to another user $v$ in the Facebook network could be the number of comments from $u$ to the posts of $v$. Nevertheless, most existing methods do not consider the weights on graph, making their algorithms unsuitable.

To address these issues, we devise a distributed share-nothing solution that employs the Monte Carlo approximation to perform a large number of random walks from all nodes of the graph in parallel. As such, for each node $u \in V$, we maintain only the PPR $\pi(u, v)$ where $v \in V$ is the node incurred in at least one random walk. Besides, we exploit the parallel pipeline framework \cite{wadp05} where (i) a subset of random walks is computed simultaneously in a pipeline and (ii) the new pipeline can remedy the deficiency of the old pipeline, as it is rare to generate long random walks due to the termination probability. Hence, it reduces the number of iterations of computation that significantly accelerates the process.

In order to optimize the above solution, we first address the issue of large nodes that would lead to the skewness of data. Specifically, we recursively divide each large node into small ones, which are then organized as the {\it alias tree} in a hierarchical manner. In addition, to address the issue of under-utilization of small nodes, we pre-compute for each small node a set of {\it short} walks to facilitate the acceleration of random walk generation on small nodes. Finally, since it would significantly degrade the performance of a pipeline which deals with very few or too many random walks at a time, we develop a method that optimizes the number of random walks in each pipeline so as to reduce the overhead of pipelining.

In summary, our contributions are the followings.

\begin{itemize}
  \item We devise the distributed algorithms based on the parallel pipeline framework that fully utilizes the parallelism to efficiently generate a large number of random walks simultaneously for all nodes of the graph.
 \item We design the hierarchical sampling algorithm that avoids the skewness on large nodes and also allows the algorithm to select a node from the set of neighbors following the distribution of weights on edges in constant time. 
  \item We develop an efficient algorithm to deal with small nodes by pre-computing short walks so that some random walks can be stopped earlier, which saves lots of costs. 
 \item We devise an effective method to optimize the number of random walks in each pipeline which largely increases the throughput of the algorithm that greatly improves its performance on large graphs. 
  \item In the extensive experiments, we demonstrate that our solution outperforms the state-of-the-art solutions by up to 21\% on accuracy and by up to 2 orders of magnitude on the running time.
\end{itemize}

\header{{\bf Paper organization}.} Section~\ref{sec:preliminaries} illustrates the definitions and notations used in the paper. An overview of our solution is depicted in Section~\ref{sec:overview}. After that, the details of the algorithm are explained in Section~\ref{sec:algorithms}. Section~\ref{sec:experiments} demonstrates the superior performance of our algorithm compared with several baseline methods over many graphs. Section~\ref{sec:related} discusses the related work of this paper. Finally, we conclude the paper in Section~\ref{sec:conclusions}.

\section{Preliminaries} \label{sec:preliminaries}

In this section, we present the concepts and frequently used notations in this paper, as listed in Table~\ref{tbl:def-notation}.

\subsection{Personalized PageRank} \label{sec:def-ppr}

Consider an edge-weighted graph $G=(V,E,w)$, where $V$ is the set of nodes, $E$ is the set of edges, and $w$ is the weighting function that maps each edge $e \in E$ to a positive number $w(e)$, i.e., $w(e) > 0$. For each node $s \in V$, we say that $t$ is an {\it out-neighbor} (resp. {\it in-neighbor}) of $s$, if there exists an edge $\langle s,t \rangle$ (resp. $\langle t,s \rangle$) such that $\langle s,t \rangle \in E$ (resp. $\langle t,s \rangle \in E$). Denote $\outnbr(s)$ (resp. $\innbr(s)$) as the set of out-neighbors (resp. in-neighbors) of $s$.

Given an out-neighbor $t \in \outnbr(s)$ of a node $s \in V$, we refer to the {\it routing probability} of $t$ with respect to $s$ as $r(s,t) = w(\langle s,t \rangle)/ \sum_{v \in \outnbr(s)} w(\langle s,v \rangle)$. As such, we have $r(s,t) \in (0, 1]$.

Given a graph $G=(V,E,w)$, a {\it walk} or {\it path} $P$ of $G$ is a sequence of nodes, denoted by $\langle v_1, v_2, \cdots, v_k \rangle$, such that (i) $v_i \in V$ for each $i \in [1, k]$ and (ii) $\langle v_i, v_{i+1} \rangle \in E$ for each $i \in [1, k-1]$. The {\it length} of $P$, denoted by $\ell(P)$, is the number of edges in $P$, i.e., $\ell(P)=k-1$. 

A {\it random walk} $P_r = \langle v_1, v_2, \cdots, v_k \rangle$ of $G$ is a path that is generated in a random manner following the distribution of routing probability. Specifically, for each node $v_i$ where $i \in [1, k-1]$, we have $v_{i+1} \in \outnbr(v_i)$ and $v_{i+1}$ is chosen with the routing probability $r(v_i, v_{i+1})$. Furthermore, at each step of the random walk $P_r$, there is a {\it termination probability} $\alpha \in (0, 1)$, which is a user-defined number, that determines whether the path will be terminated or not. In other words, after choosing $v_{i+1}$, $P_r$ will stop at $v_{i+1}$ with the termination probability $\alpha$. Hence, the expected length of a random walk is $1/\alpha$. In case that $\outnbr(v_i)$ is empty, following the previous work \cite{srxz+17}, we restart $P_r$ from its head, i.e., $v_1$.

\begin{example}
Figure~\ref{fig:graph} shows a graph $G$ which has 8 nodes and 12 weighted edges. The weights of edges incident to each node are normalized such that the routing probability of each edge equals its weight. Consider node $v_0$ and $\alpha = 0.5$. We have $\outnbr(v_0) = \{v_1, v_2, v_3, v_4, v_5, v_6\}$ and $\innbr(v_0) = \{v_5\}$. The path $\langle v_0, v_3, v_4 \rangle$ is walk starting from $v_0$ and ending at $v_4$, which can be generated by (i) first selecting $v_3$ from $\outnbr(v_0)$ with the probability $0.15$, (ii) and then with the probability of $(1-\alpha)=0.5$ continuing to select $v_4$ from $\outnbr(v_3)$ with the probability $1$, (iii) finally stopping at $v_4$ with the termination probability $\alpha=0.5$. Besides, there exist the other paths that start from $v_0$ and end at $v_4$, such as $\langle v_0, v_4 \rangle$ and $\langle v_0, v_2, v_3, v_4 \rangle$.
\done
\end{example}

Given a graph $G=(V,E,w)$ and two nodes $u,v \in V$, the {\it personalized PageRank} (PPR) of $v$ with respect to $u$, denoted by $\pi(u, v)$, is the probability of a random walk which starts from $u$ and ends at $v$.

To address this problem, there are roughly two kinds of solutions: One is the matrix based solution that utilizes the adjacency matrix of the graph and the matrix of fully PPR, the other one is based on the Monte Carlo (MC) method that estimates the probability by simulating a large number of random walks on the graph. However, the matrix-based solution incurs a space complexity of $O(n^2)$ where $n$ is the number of nodes in the graph, which is extremely huge, especially for large graph. As such, we adopt the MC based solution in our proposed approach.

Specifically, for each node $s \in V$, the MC method generates a number $\omega$ of random walks starting from $s$ to estimate $\pi(s, t)$ for every other node $t \in V$. If there are $\omega'$ random walks terminating at $t$, then we have an unbiased estimate of $\pi(s, t)$ as $\api(s, t) = \omega'/\omega$. It is easy to see that a larger $\omega$ leads to a more accurate estimate of $\pi(s, t)$, and also a much more expensive computational cost. To strike a good trade-off, we follow the theorem in the work \cite{dbkt05} by setting $\omega = \Omega(\frac{\log(1/p_f)}{\epsilon^2 \delta})$ to obtain an $\epsilon$-approximate PPR for any $\pi(s,t) \geq \delta$ with a probability $1-p_f$, where $s,t \in V$, which is formally defined as follows.

\begin{figure}[!t]
\centering
    \begin{tabular}{c}
    \includegraphics[height=30mm]{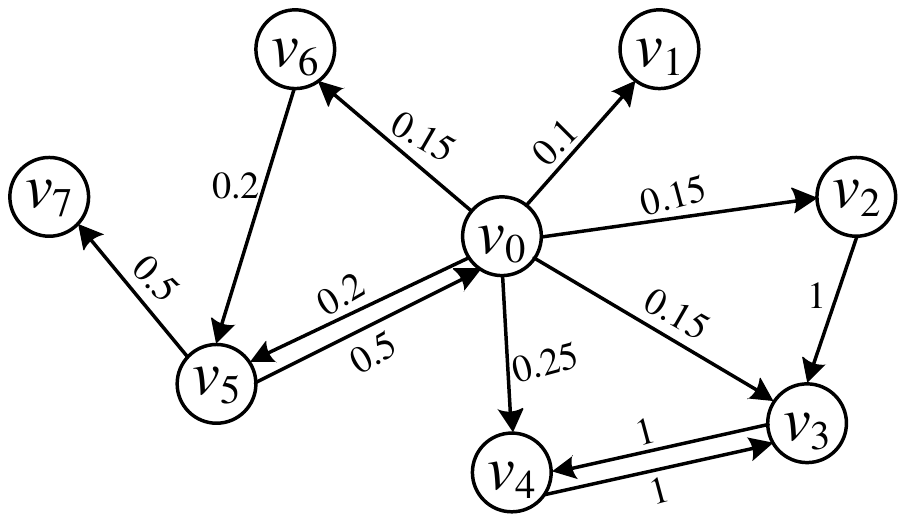}
    \end{tabular}
\vspace{-1mm}
\figcapup
    \caption{A toy graph with 8 nodes and 12 weighed edges.}  \label{fig:graph}
\end{figure}

\begin{definition}[Fully Approximate PPR (FAPPR)]\label{def:approx-ppr}
Given an edge-weighted graph $G=(V,E,w)$, a threshold $\delta$, an error bound $\epsilon$, and a failure probability $p_f$, the fully approximate PPR returns an estimated PPR $\api(s,t)$ for all pairs $(s,t)$ of nodes where $s,t \in V$, such that for all $\pi(s,t) \ge \delta$,
\begin{align}
|\pi(s,t) - \api(s,t)| \leq \epsilon \cdot \pi(s,t) \label{eq:result_quality}
\end{align}
holds with a probability at least $1 - p_f$. \done
\end{definition}

\header
{\bf Goal}. In the present work, we are to devise efficient distributed algorithms for the computation of fully approximate PPR on large graphs.

\subsection{The Alias Method} \label{sec:alias}

One of the key operations in random walk is to select one node $t$ from $s$'s out-neighbors, where $s \in V$, according to the routing probability. In other words, given a set $S$ of elements, as well as the routing probability $r(e)$ for each $e \in S$, where $\sum_{e \in S} r(e) = 1$, we are to select an element $e$ from $S$ with the probability $r(e)$.

In a naive approach, similar to \cite{pp06}, we can identify an element $e$ from $S$ which has the largest $z^{1/r(e)}$, where $z \in [0, 1]$ is a random number. Consequently, this approach needs to inspect each element in $S$ on the fly, whose time complexity is $O(|S|)$. However, it might lead to the workload imbalance in the distributed computing algorithm on the graph, as the sets of out-neighbors could be of different sizes.

We present {\it the alias method} \cite{m91}, which pre-computes a {\it switch probability} $p(e) \in [0, 1]$ and an {\it alias} $a(e) \in S$ for each element $e \in S$. As such, an element $e$ can be selected from $S$ with constant time complexity by exploiting $p(e)$ and $a(e)$.

Algorithm~\ref{alg:alias} illustrates the pre-computing step of the alias method that takes as input the routing probability $r$ and the set $S$ of elements, i.e., edges. Initially, for each edge $e \in S$, its alias $a(e)$ is set as $e$, and its switch probability $p(e)$ is computed as $|S| \cdot r(e)$ (Line 1). Thus, we have $\sum_{e \in S} p(e) = |S|$ at the beginning of the algorithm. Denote $S^l$ (resp. $S^s$) as the set of elements $e$ where $p(e) > 1$ (resp. $p(e) < 1$) (Line 2). Then, we choose an element $x$ from $S^s$ and an element $y$ from $S^l$ respectively. After that, we update the alias of $x$ as $a(x) = y$, and decrease the probability of $y$ by $(1-p(x))$, referred to as $p'(y) = p(y) - (1-p(x))$. As $x$ is processed, we remove $x$ from $S_s$. Besides, If $p'(y) \leq 1$, then we remove $y$ from $S^l$. Furthermore, if $p'(y) < 1$, then we add $y$ into $S^s$. We iteratively process the elements in $S^l$, until $S^l$ is empty (Line 3-9). Finally, we have $p(e)$ and $a(e)$ for all $e \in S$ as the results. Since each element in $S$ is inspected only once, the time complexity of this pre-computing step is $O(|S|)$.

\begin{example} \label{emp:alias-setup}
Consider the set $S_1 = \{e_1, e_2, e_3\}$ in Fig.~\ref{fig:hierarchical-alias}. Consider the edges $e$ in $S_1$. The routing probability of $e_1$ is computed as $r(e_1) = w(e_1) / \sum_{e \in S_1} w(e) = 0.25$. Similarly, we have $r(e_2) = r(e_3) = 0.375$. At the beginning, the switch probability of $e_1$ is computed as $p(e_1) = |S_1| \cdot r(e_1) / \sum_{e \in S_1} r(e) = 0.75$. Similarly, we have $p(e_2) = p(e_3) = 1.125$. Besides, we have $a(e_i) = e_i$ where $1 \leq i \leq 3$. As such, we construct $S_1^l = \{e_2, e_3\}$ since $p(e_2) > 1$ and $p(e_3) > 1$, and $S_1^s = \{e_1\}$ since $p(e_1) < 1$. Assume that $e_1 \in S_1^s$ and $e_2 \in S_1^l$ are selected. Then, we update $a(e_1)$ as $e_2$, remove $e_1$ from $S_1^s$, and decrease $p(e_2)$ by $(1 - p(e_1))$, resulting in $p(e_2) = 0.875$. Since $p(e_2) < 1$, we remove $e_2$ from $S_1^l$ and add $e_2$ into $S_1^s$. Then, we have $S_1^s = \{ e_2 \}$ and $S_1^l = \{ e_3 \}$. In the same way, we can compute that $a(e_2) = e_3$ and $p(e_3) = 1$, which forms the aliases and switch probabilities of $S_1$, as shown in Fig.~\ref{fig:hierarchical-alias}.
\done
\end{example}

To select an element $e$ from $S$ with the switch probability $p(e)$ and the alias $a(e)$, we first identify an element $e'$ with the probability $1/|S|$. Then, a probability threshold $p_t \in [0, 1]$ is generated by random. If $p_t \leq p(e')$, then we have $e = e'$; Otherwise, let $e$ be the alias of $e'$, i.e., $e = a(e')$. Therefore, the probability of selecting $e$ from $S$ is $\frac{1}{|S|} \cdot p(e) + \sum_{e' \in S \wedge a(e') = e}\frac{1}{|S|} \cdot (1-p(e')) = r(e)$. Apparently, the time complexity of selecting by the alias method is $O(1)$.

\begin{algorithm}[!t]
	\caption{Alias($S$, $r$)}\label{alg:alias}
	\KwIn{The set $S$ of elements and the routing probability $r$}
	\KwOut{The switch probability $p(e)$ and the alias $a(e)$ for all $e \in S$}

    Let $p(e) = |S| \cdot r(e)$ and $a(e) = e$\;
    Let $S^l = \{ e \in S | p(e) > 1 \}$ and $S^s = \{ e \in S | p(e) < 1 \}$\;
    \While{\textnormal{$S^l$ is not empty}}
    {
        Select any elements $x \in S^s$ and $y \in S^l$\;
        Let $a(x) = y$ and remove $x$ from $S^s$\;
        Decrease $p(y)$ by $(1 - p(x))$\;
        \If{\textnormal{$p(y) \le 1$}}
        {
            Remove $y$ from $S^l$\;
            If $p(y) < 1$, then add $y$ into $S^s$\;
        }
    }
    \Return $p(e)$ and $a(e)$ for all $e \in S$.
\end{algorithm}

\subsection{Distributed Computing} \label{sec:dist-graph}

However, it is impractical to compute FAPPR on a single machine, since (i) the graph could be massive that cannot fit in the memory of a single machine, and (ii) there could be an extremely large amount of computational cost due to lots of graph traversals, especially when the graph is sufficiently large, which also results in a large number of random walks (see Section~\ref{sec:def-ppr}).

As such, we utilize the distributed computing to generate the large number of random walks for all nodes of the graph among massive machines in parallel. In particular, we adopt Spark \cite{mmms+10}, whose fundamental computing unit is MapReduce \cite{js04,ywx13}. Note that, it is easy to extend to the other parallel computing framework, such as Pregel \cite{gmaj+10}. In a MapReduce job, there are mainly two consequent phases: The first one is the Map phase that takes as input the data and emits key-value pairs by the map function; After that, it is the Reduce phase that aggregates the key-value pairs by the key and then applies the reduce function on the aggregated data, which outputs the new key-value pairs.

To facilitate the processing of graphs on Spark, we store the graph as key-value pairs, such that, for each node $v \in V$, we a key-value pair $\langle v, \outnbr(v) \rangle$ where the key is $v$ and the value is the set of $v$'s out-neighbors, i.e., $\outnbr(v)$. Denote the set of key-value pairs of $G$ as $KV(G) = \{ \langle v, \outnbr(v) \rangle | v \in V \}$. Besides, to present a path $P = \langle v_h, \cdots, v_t \rangle$, we use a key-value pair $\langle v_t, v_h \rangle$ consisting of the tail $v_t$ of $P$ as the key and the head $v_h$ of $P$ as the value. This is because (i) only the head and tail of a random walk can contribute to the calculation of PPR, and (ii) the extension of a path happens at its tail.

Based on that, given a set $\mathcal{P}$ of paths, we can make a further {\it move} or {\it step} for all paths $P \in \mathcal{P}$ by one {\it round} of computation on Spark that utilizes a MapReduce job: First, the Map phase takes as input both the key-value pairs $\langle v_h, v_t \rangle$ of all paths $P \in \mathcal{P}$ and the key-value pairs $\langle v, \outnbr(v) \rangle \in KV(G)$ of the graph $G$, and emits the key-value pairs as what they are; Then, in the Reduce phase, we aggregates the key-value pairs by the key, resulting in that each aggregated data has one $\langle v_h, \outnbr(v_h) \rangle$ and several other key-value pairs $\langle v_h, u \rangle$ that represent the paths starting from $u$ and currently visiting $v_h$. For each $\langle v_h, u \rangle$, we select a node $v'_h$ from $\outnbr(v_h)$ with the routing probability $r(v_h, v'_h)$, and output a new key-value pair $\langle v'_h, u \rangle$, which represents the extended path of $\langle v_h, u \rangle$.

Note that, one can implement the above algorithm by exploiting the {\it join} operation on Spark. As such, it can be optimized by a careful designed data partitioning scheme \cite{wdyy16}, which can avoid the cost of transferring between machines the set $KV(G)$ of key-value pairs of $G$.

\begin{table}[t]
\small
\centering
\caption{Frequently used notations.}\label{tbl:def-notation}
\vspace{-2mm}
 \begin{tabular} {|l|p{2.5in}|} \hline
   Notation             &   Description
   \\ \hline
   {\scriptsize $G = (V,E,w)$}        &   an edge-weighted and directed graph, where $V$ is the set of nodes, $E$ is the set of edges, and $w$ is the weighting function that maps each edge in $E$ to a positive number. \\ \hline
   $r$        &   the routing probability of an edge (see Section~\ref{sec:def-ppr}). \\ \hline
   $\omega$   &   the number of random walks for each node (see Definition~\ref{def:approx-ppr}). \\ \hline
   $\gamma$   &   the number of random walks in each pipeline (see Definition~\ref{def:approx-ppr}). \\ \hline
   $\alpha$   &   the termination probability of a random walk (see Definition~\ref{def:approx-ppr}). \\ \hline
   $\epsilon$ &   the relative accuracy guarantee (see Definition~\ref{def:approx-ppr}). \\ \hline
   $\delta$   &   the PPR value threshold (see Definition~\ref{def:approx-ppr}). \\ \hline
   $p$        &   the switch probability of an edge in the alias tree. \\ \hline
   $a$        &   the alias of an edge in the alias tree. \\ \hline
   $d$        &   the degree threshold in Algorithm~\ref{alg:hierarchical-alias} and Algorithm~\ref{alg:big-move}. \\ \hline
\end{tabular}
\end{table}

\begin{figure*}[t]
\centering
    \begin{tabular}{c}
\includegraphics[height=36mm]{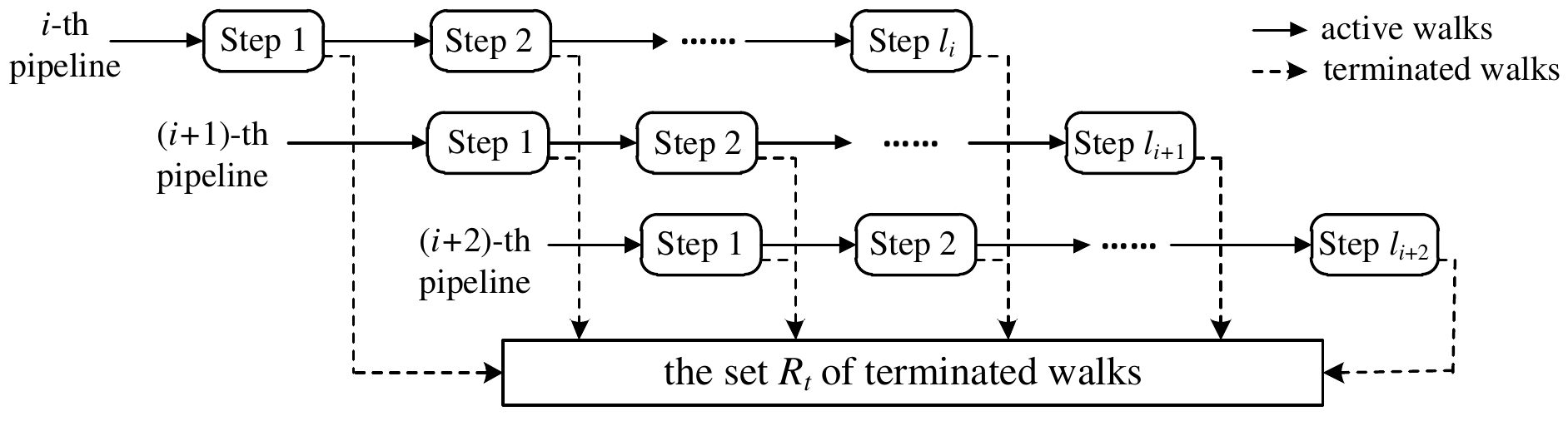}
    \end{tabular}
    \caption{This is an illustration of the parallel pipeline framework. In this framework, there are $\lceil \omega/\gamma \rceil$ pipelines, each of which generates at most $\gamma$ random walks by several steps. For each $i \in [1, \lceil \omega/\gamma \rceil)$, the $(i+1)$-th pipeline starts, only when the random walks in the $i$-th pipeline have completed the first step. All terminated walks in the pipelines are maintained in the set $R_t$, which are the final output of the computation of random walks.}
    \label{fig:parallel-pipeline}
\end{figure*}

\section{Solution Overview} \label{sec:overview}

\begin{algorithm}[t]
	\caption{$\few$($G$, $\epsilon$, $p_f$, $\delta$, $\alpha$)}\label{alg:overview}
	\KwIn{Graph $G = (V,E,w)$, relative accuracy guarantee $\epsilon$, failure probability $p_f$, threshold $\delta$, termination probability $\alpha$}
	\KwOut{The estimated PPR $\api(s,t)$, for all $s,t \in V$}

    Let $R_t = \emptyset$ be the collection of terminated random walks\;
    Let $R_a = \emptyset$ be the collection of active random walks\;
    Let $\omega = \Omega(\frac{\log(1/p_f)}{\epsilon^2 \delta})$ be the number of random walks to be computed for each node $v \in V$\;

    \While{\textnormal{$S \neq \emptyset$ or $\omega > 0$}}
    {
        \If {$\omega > 0$}
        {
            \For{\textnormal{each node $v \in V$ {\bf in parallel}}}
            {
                Add into $R_a$ $\min(\gamma, \omega)$ number of paths consisting of a single node $v$, i.e., $\langle v,v \rangle$\;
            }
            Update $\omega$ as $\omega - \gamma$\;
        }
        \For {\textnormal{each path $P \in R_a$ {\bf in parallel}}}
        {
            Let $s$ and $u$ be the head and tail of $P$ respectively\;
            Select a node $t \in \outnbr(u)$\;
            Append $t$ to the end of $P$\;
            Generate a random number $q \in [0, 1]$\;
            \If{$q \leq \alpha$}
            {
                Mark $P$ as terminated\;
            }
        }
        Remove from $R_a$ the terminated paths, which are then added into $R_t$\;
    }
    Let $R_t(s,t)$ be the set of paths $P \in R_t$ where $s$ and $t$ are the head and tail of $P$ respectively\;
    Compute $\api(s,t) = \sum_{P \in R_t(s,t)} 1/\omega$ in parallel\;
    \Return $\api(s,t)$ for all $s,t \in V$.
\end{algorithm}

In this section, we present an overview of our distributed algorithm for the computation of the fully approximate edge-weighted PPR (see Definition~\ref{def:approx-ppr}). As the major cost of FAPPR resides in the computation of random walks, in the sequel, we will concentrate our discussions on the algorithms for generating random walks.

Given a graph $G=(V,E,w)$, to generate $\omega$ random walks for each node $v \in V$ in the distributed computing framework, one straightforward solution could run in $\omega$ iterations, each of which computes one random walk for each node. In particular, one random walk $P$ ending with the node $v$ can be extended by one more step in one round of the distributed computing, where we append $P$ with one node randomly chosen from $\nbr(v)$. However, as the expected length of a random walk is $1/\alpha$, this approach results in $\omega / \alpha$ number of rounds of computation, which is highly expensive. Alternatively, one might propose to generate all $\omega$ random walks for each node $v \in V$ using one iteration with only $1/\alpha$ rounds, where $\omega$ number of paths for each node are computed simultaneously.
However, this approach would incur significant overhead in the management of memory, as $\omega$ could be sufficiently large.

To remedy the aforementioned issues, we exploit the parallel pipeline framework, as depicted in Figure~\ref{fig:parallel-pipeline}, where (i) each pipeline generates at most $\gamma \in [1, \omega]$ random walks for each node $v \in V$, and (ii) the pipelines begin sequentially, such that one pipeline $L$ starts only when the pipeline $L'$, prior to $L$, completes one round of computation, i.e., the paths in $L'$ are extended with one step. Therefore, there are $\lceil \omega / \gamma \rceil$ pipelines, and the expected number of rounds in each pipeline is $1/\alpha$. That is, the parallel pipeline framework is expected to employ $(\lceil \omega / \gamma \rceil + 1/\alpha - 1)$ number of rounds of computation, which is much smaller than the one of previous approaches. Besides, as each random walk could be terminated with the probability $\alpha$, the workload of each pipeline becomes fewer and fewer as the number of iterations increases. Therefore, we can avoid the under-utilization of computing for long pipelines, and make sure that the pipeline $L$ would receive few affections from the pipelines that are far from $L$.

Algorithm~\ref{alg:overview} depicts an overview of the proposed algorithm that takes as input the graph $G=(V,E,w)$, the relative accuracy guarantee $\epsilon$, failure probability $p_f$, threshold $\delta$, and termination probability $\alpha$. Initially, we have the empty collections of terminated and active random walks, referred to as $R_t$ and $R_a$ respectively (Lines 1-2). Besides, the number $\omega$ of random walks is calculated as $\Omega(\frac{\log(1/p_f)}{\epsilon^2 \delta})$. Then, we generate the random walks by several iterations (Lines 4-16). Each iteration consists of three phases, as follows. In the first phase, for each node $v \in V$, if $\omega > 0$, then we create $\min(\gamma, \omega)$ number of paths $\langle v,v \rangle$ consisting of a single node $v$, and decrease $\omega$ by $\gamma$ (Lines 5-8). In the second phase, for each path $P \in R_a$, we extend $P$ with one step by selecting a node $t$ the set of out-neighbors of the tail $u$ of $P$ according to the routing probability $r(u,t)$. After that, we append $t$ to the end of $P$ and mark $P$ as terminated if $\alpha$ is not smaller than a random number generated on the fly (Lines 12-15). Finally, we remove from $R_a$ the terminated paths, which are then added into $R_t$ (Line 16). At the end of the algorithm, to compute $\api(s,t)$ for all $s,t \in V$, we calculate the number of paths $P$ in $R_t$ such that the head and tail of $P$ are $s$ and $t$ respectively.

\section{Optimization Techniques} \label{sec:algorithms}

In this section, we develop several efficient techniques that significantly improve the performance of  Algorithm~\ref{alg:overview}.

\subsection{Handling Large Nodes: The Alias Tree} \label{sec:alias-tree}

As explained previously, we adopt the alias method which allows us to select one node $t$ from $s$'s out-neighbors, where $s \in V$, according to the routing probability with constant time and linear space complexity.
However, when the degree of a node $u \in V$ is sufficiently large, due to that the size of $\outnbr(u)$ could be too large to fit in memory. To avoid this deficiency, we devise a hierarchical approach that recursively divides $\outnbr(u)$ into several subsets, each of which has a size not bigger than $d$, that can fit in memory, where $d$ is a user-defined parameter and can be estimated according to the amount of available memory. Then, we denote each subset $S' \subseteq S$ as a new item with a weight equal to $\sum_{e \in S'} r(e)$. As such, we construct a new set $\mathcal{S}$ of items, whose alias and switch probability can be computed by the alias method on subsets of $S$. We build the {\it alias tree} by letting the alias and switch probability of $S'$ be a child of $\mathcal{S}$ for each item $S' \in \mathcal{S}$. Apparently, the alias tree is multi-way tree, where each node has a similar number of children. If the size of the set $\mathcal{S}$ of the subsets exceeds the size of memory, we recursively divide $\mathcal{S}$ in the same way.

\begin{algorithm}[t]
	\caption{Construct-Alias-Tree($S$, $r$, $d$)}\label{alg:hierarchical-alias}

	\KwIn{The set $S$ of elements, the routing probability $r$, and the degree threshold $d$}
	\KwOut{The alias tree $T(S)$ of $S$}

    \If{$|S| \leq d$}
    {
        Let $T(S)$ be the output of the alias method on $S$\;
    }
    \Else
    {
        Let $b = \lceil |S| / d \rceil$\;
        Split $S$ into $b$ disjoint subsets by random, referred to as $\mathcal{S} = \{ S_1, S_2, \cdots, S_b \}$\;
        Let $w'( S_i ) = \sum_{e \in S_i} r(e)$ for each $i \in [1, b]$\;
        Let $r'( S_i ) = \frac{w'(S_i)}{\sum_{S' \in \mathcal{S}} w'(S')}$ for each $i \in [1, b]$\;
        Let $r_i( e ) = \frac{r(e)}{\sum_{e' \in S_i} r(e')}$ for each $e \in S_i$ and $i \in [1, b]$\;
        For each $i \in [1, b]$, compute $f(S_i)$ by the alias method, which is treated as a child of $\mathcal{S}$\;
        Build $T(S)$ with Construct-Alias-Tree($\mathcal{S}$, $r'$, $d$)\;
    }
    \Return $T(S)$.
\end{algorithm}

Algorithm~\ref{alg:hierarchical-alias} provides the pseudo-code of the construction of an alias tree on a given set $S$ of items. The algorithm first inspects the size of $S$. If its size is not bigger than $d$, then the algorithm returns the alias directly computed on $S$ by the alias method, which forms a single node of the alias tree (Line 2). Otherwise, we randomly split $S$ into $b$ disjoint subsets (Line 5). Denote the set of subsets by $\mathcal{S} = \{S_1, S_2, \cdots, S_b \}$. Hence, we have $\cup_{S' \in \mathcal{S}} S' = S$ and $\cap_{S' \in \mathcal{S}} S' = \emptyset$. Then, we compute the weight of $S' \in \mathcal{S}$, which is the sum of routing probability of items in $S'$, referred to as $w'(S') = \sum_{e \in S'} r(e)$. Then, for each subset $S_i$ where $1 \leq i \leq b$, the routing probability of $S_i$ with respect to $\mathcal{S}$ is computed as $r'( S_i ) = \frac{w'(S_i)}{\sum_{S' \in \mathcal{S}} w'(S')}$ (Line 7). Besides, to compute the alias of each item $e$ in each subset $S_i$, the routing probability of $e$ with respect to $S_i$ is updated as $r'( S_i ) = \frac{w'(S_i)}{\sum_{S' \in \mathcal{S}} w'(S')}$ (Line 8). As a result, the child $f(S_i)$ of $\mathcal{S}$ is created by performing the alias method on the updated routing probability of items in $S_i$ (Line 9). Finally, we recursively construct the alias tree on $\mathcal{S}$, until the size of $\mathcal{S}$ fits in memory, i.e., not bigger than $d$ (Line 10).

\begin{example} \label{emp:graph}
Consider the 6 out-going edges of $v_0$ in Figure~\ref{fig:graph}, denoted by $e_1, e_2, \cdots, e_6$ respectively. As shown, in Figure~\ref{fig:hierarchical-alias}, the alias tree stores the 6 edges by treating them as 6 items in a set $S$. Assume that the degree threshold is $d = 3$. As the size of $S$ is larger than $d$, we randomly divide $S$ into two sets, e.g., $S_1 = \{e_1, e_2, e_3\}$ and $S_1 = \{e_4, e_5, e_6\}$. Hence, we have $w'(S_1) = r(e_1) + r(e_2) + r(e_3) = 0.4$. Similarly, we have $w'(S_2) = 0.6$.
As such, to obtain the routing probabilities, we can normalize the weights of elements in $S_1$, $S_2$, and $\mathcal{S}$ respectively. Then, by utilizing the alias method, we construct the aliases and switch probabilities of $S_1$ and $S_2$, which form the children of $\mathcal{S}$. After that, we recursively construct the alias of $\mathcal{S}$, which is returned as shown in Figure~\ref{fig:hierarchical-alias}, since the size of $\mathcal{S}$ is less than $3$.
\done
\end{example}

To select an item, we traverse the alias tree $T(S)$ recursively by starting from the root $T$ of $T(S)$. To explain, we first select an item $e$ from $T$ with a probability $1/|T|$, which can be achieved in $O(1)$ time. Then, a random number $r \in [0, 1]$ is generated. If $r \leq p(e)$, then we select $e$, otherwise $a(e)$. Afterwards, if $f(e)$ is not empty, i.e., $e$ has children, we recursively inspect the items in $f(e)$; otherwise, we return $e$ as the result.
Apparently, The running time of the selection process is determined by the height of the alias tree, i.e., $O(\log(|S|)/\log(d))$, where $d$ is the degree threshold.
Lemma~\ref{lemma:generate-sample} provides the correctness of random item generation based on the alias tree.

\begin{example} \label{emp:alias-select}
Consider the alias tree in Example~\ref{emp:alias-setup}. To select an element, it consists of two steps: First, we select an element $S'$ from $\mathcal{S}$; Then, an element $e$ is selected from $S'$. In the first step, assume that $S_1$ is selected with the probability $1/|\mathcal{S}| = 0.5$. After that, we generate a random number $r \in [0, 1]$. If $r \leq p(S_1)$, we select $S_1$, otherwise select $a(S_1) = S_2$. As a result, the probability of selecting $S_1$ is $0.5 \cdot p(S_1) = 0.4$, which equals the normalized weight of $S_1$, i.e., the routing probability of $S_1$. Assume that $S_1$ is selected in the first step. We recursively select an element from $S_1$ in the same way, which leads to the final selection.
\done
\end{example}

\begin{figure}[!t]
\centering
    \begin{tabular}{c}
    \includegraphics[height=35mm]{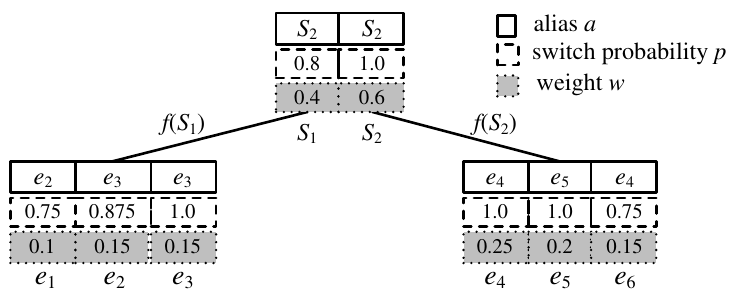}
    \end{tabular}
\vspace{-2mm}
    \caption{An example of the alias tree.}  \label{fig:hierarchical-alias}
\end{figure}

\begin{Lemma} \label{lemma:generate-sample}
Given a set $S$ of edges, as well as the alias tree $T$ constructed on $S$, the selection an element $e \in S$ from $T$ has the probability equal to $r(e)$.
\end{Lemma}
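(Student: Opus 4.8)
The plan is to prove the claim by induction on the height of the alias tree $T(S)$, exploiting the recursive structure laid out in Algorithm~\ref{alg:hierarchical-alias}.

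\textbf{Base case.} When $|S| \le d$, the tree is a single node produced by the plain alias method (Algorithm~\ref{alg:alias}). Here I would simply invoke the correctness argument already given in Section~\ref{sec:alias}: the probability of returning $e$ is $\frac{1}{|S|} p(e) + \sum_{e' \in S,\, a(e') = e} \frac{1}{|S|}(1 - p(e')) = r(e)$. So the base case is immediate from the stated property of the alias method.

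\textbf{Inductive step.} Suppose the claim holds for every alias tree of height less than $h$, and let $T(S)$ have height $h > 1$. By construction, $S$ is partitioned into $b = \lceil |S|/d \rceil$ disjoint subsets $\mathcal{S} = \{S_1, \dots, S_b\}$, and $T(S)$ is the alias tree built on $\mathcal{S}$ with the derived routing probabilities $r'(S_i) = w'(S_i)/\sum_{S' \in \mathcal{S}} w'(S')$ (where $w'(S_i) = \sum_{e \in S_i} r(e)$), with each $f(S_i)$ a child built by the alias method (recursively, an alias tree) on $S_i$ equipped with the renormalized probabilities $r_i(e) = r(e)/\sum_{e' \in S_i} r(e')$. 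Selecting an element from $T(S)$ first selects some block $S_i$ from the tree on $\mathcal{S}$ — which has height $h-1$, so by the induction hypothesis this happens with probability exactly $r'(S_i)$ — and then recursively selects $e$ from the subtree $f(S_i)$, which by the induction hypothesis returns $e \in S_i$ with probability $r_i(e)$. Hence for $e \in S_i$ the total probability of returning $e$ is
\begin{align}
r'(S_i) \cdot r_i(e) = \frac{w'(S_i)}{\sum_{S' \in \mathcal{S}} w'(S')} \cdot \frac{r(e)}{\sum_{e' \in S_i} r(e')} = \frac{\sum_{e' \in S_i} r(e')}{1} \cdot \frac{r(e)}{\sum_{e' \in S_i} r(e')} = r(e),
\end{align}
where I used $\sum_{S' \in \mathcal{S}} w'(S') = \sum_{e \in S} r(e) = 1$ and $w'(S_i) = \sum_{e' \in S_i} r(e')$. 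Since the blocks are disjoint and cover $S$, every $e \in S$ is handled exactly once, completing the induction.

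\textbf{Main obstacle.} The argument is essentially a bookkeeping exercise, so the only real subtlety is being careful that the height of the tree built on $\mathcal{S}$ is genuinely smaller than that of $T(S)$ (so the induction is well-founded), and that the two renormalizations — of block weights and of within-block probabilities — telescope cleanly so the normalizing denominators cancel. One should also note that the recursion on $\mathcal{S}$ may itself split further if $|\mathcal{S}| > d$; the induction on height rather than on a single level of recursion handles this uniformly. I would also remark that the per-node selection step itself (picking an item with probability $1/|T|$ and then applying the switch probability $p$/alias $a$) is exactly the alias-method selection already proven correct, so no new probabilistic computation is needed beyond the base case.
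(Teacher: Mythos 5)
Your proof is correct and is essentially the paper's argument in a different wrapper: the paper writes the selection probability directly as the telescoping product $\frac{w(S_2)}{w(S_1)}\cdot\frac{w(S_3)}{w(S_2)}\cdots\frac{w(e)}{w(S_i)}$ of conditional selection probabilities along the root-to-leaf path, which is exactly what your induction on tree height unrolls to. Your version is somewhat more careful (it makes the well-foundedness of the recursion and the reliance on the flat alias method's correctness explicit), but the key cancellation of normalizing weights is the same.
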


\begin{proof}
Let $\langle S_1, S_2, \cdots, S_i \rangle$ be the path from the root of $T$ to a leaf of $T$, where $e$ resides.
Consider a node $S_j$, where $1 \leq j \leq i$. As shown previously, each item $x$ in $S_j$ is selected with the probability $w(x) / \sum_{y \in S_j} w(y)$. Besides, the weight of $S_j$ is the sum of weights of items in $S_j$, i.e., $w(S_j) = \sum_{y \in S_j} w(y)$. Note that, $w(S_1) = \sum_{e \in S} r(e) = 1$, since $S_1$ represents all items in $S$. Therefore, the probability of selecting $e$ is
\[
\frac{w(S_2)}{w(S_1)} \cdot \frac{w(S_3)}{w(S_2)} \cdot \cdots \cdot \frac{w(e)}{w(S_i)}
= \frac{r(e)}{w(S_1)} = r(e).
\]
\end{proof}

\subsection{Handling Small Nodes: Make Big Moves} \label{sec:big-move}

Most real-life graphs follow the power law \cite{mpc99}, whose implication is that the majority of nodes in the graph have a very small set of neighbors (see Table~\ref{tbl:exp-real-data}). To speed up the random walks among the small-degree nodes, we devise a pre-computing method that generates the {\it big moves} for each small-degree node, such that each big move represents a random walk of several steps.
To facilitate the computation with big moves, we maintain each big move $P^b$ with the similar structure as the normal random walk with a difference in that we add a mark, denoted by $M(P^b)$, to represent whether $P^b$ is terminated or not. If $M(P^b) = 0$, then $P^b$ is a terminated path; otherwise, if $M(P^b) = 1$, then $P^b$ is an active path. As a result, we can denote a big move of $v$ by $\langle u,t_m,p \rangle $ which are the walks ending at $u$ with the termination mark $t_m$ and the probability $p$. In other words, compared with the set of out-neighbors, the big moves consists of the event of termination, which expands the probability space.

\begin{algorithm}[!t]
	\caption{Pre-compute-Big-Moves($G$, $V_s$, $d$, $\alpha$)}\label{alg:big-move}

	\KwIn{The graph $G=(V,E,w)$, the set $V_s \subseteq V$ of small nodes, the degree threshold $d$, and the termination probability $\alpha$}
	\KwOut{The big moves for all $v \in V_s$}

    Let $prevBSize = 0$\;
    Let $F(v) = \{ \langle u, p(v,u) \rangle \mid u \in \outnbr(v) \}$ for each $v \in V_s$\;
    Let $B(v) = F(v)$ for each $v \in V_s$\;

    \While{ $|B(v)| + |F(v)| < d$ \textnormal{and} $prevBSize < |B(v)|$ }
    {
        Update $prevBSize$ as $prevBSize = |B(v)|$\;
        Let $F'(v)$ be an empty collection for each $v \in V_s$\;
        \For{\textnormal{each node $\langle u, p(v,u) \rangle \in F(v)$ {\bf in parallel}}}
        {
            \For{\textnormal{each node $t \in \outnbr(u)$} {\bf in parallel}}
            {
                Add $\langle t, p(v,u)  \cdot (1-\alpha) \cdot p(u, t) \rangle$ into $F'(v)$\;
            }
        }
        Update $B(v)$ as the aggregation result on the pairs $\langle t, p \rangle$ in $F'(v) \cup B(v)$ by $t$\;
        Replace $F(v)$ with $F'(v)$\;
    }

    Let the set of terminated big moves be $P^b_t(v) = \{ \langle t, 0, p \cdot \alpha \rangle \mid \langle t, p \rangle \in B(v) \} $ for $v \in V_s$\;
    Let the set of active big moves be $P^b_a(v) = \{ \langle t, 1, p \cdot (1-\alpha) \rangle \mid \langle t, p \rangle \in F(v) \} $ for $v \in V_s$\;

    \Return $P^b_t(v) \cup P^b_a(v)$ for all $v \in V_s$.
\end{algorithm}

Algorithm~\ref{alg:big-move} describes the generation of big moves for each node $v$ in the set $V_s$ of small nodes. Initially, for each $v \in V_s$, we have the set $F(v)$ of active walks be the set of out-neighbors $u$ of $v$ with the accumulated probability equal to $p(v,u)$, i.e., $F(v) = \{\langle u, p(v, u) \rangle \mid u \in \outnbr(v)\}$. Besides, we let the set $B(v)$ of the terminated walks be initialized as $F(v)$, and we denote the size of $B(v)$ before updating (Line 10) by $prevBSize$, which is initialized as $0$. Note that, the initial sizes of $F(v)$ and $B(v)$ equal the size of $\outnbr(v)$. In what follows, the algorithm runs in several iterations. In each iteration, we extend each walk in $F(v)$ with one more step by adding each node $u \in \outnbr(v)$, resulting in a set $F'(v)$ of new walks. Then, we compute the accumulated paths in both $B(v)$ and $F'(v)$, and also update $F(v)$ with new walks (Lines 5-11). These processes are repeated in iterations, until (i) the number of terminated and active walks exceeds $d$, i.e., $|B(v)| + |F(v)| \geq d$, or (ii) the walks can not reach any new nodes, i.e., the size of $B(v)$ does not change any more, namely $prevBSize = |B(v)|$.  Finally, we integrate the walks in $B(v)$ and $F(v)$ by multiplying the probability in $B(v)$ and $F(v)$ with $\alpha$ and $1-\alpha$ respectively, and adding the termination mark 0 and 1 respectively, which leads to resulting set $BM(v)$ of big moves of $v$.

\begin{example} \label{emp:big-move}
Consider the graph $G = (V, E, w)$ in Figure~\ref{fig:graph}. Assume that the degree threshold $d=4$ and the termination probability $\alpha=0.5$. Then, all the nodes $v \in V$, except $v_0$, need to compute their big moves, since $|\outnbr(v)| < d$. Consider node $v_2$, which has one out-neighbor $v_3$. To compute the big move of $v_2$, we first initialize the set $F(v_2)$ as $\{ \langle v_3, 1 \rangle \}$ and the set $B(v_2) = F(v_2)$. Since $|F(v_2)| + |B(v_2)| = 2 < d$, we extend each walk in $F(v_2)$ by one step, and obtain the set $F'(v_2) = \{ \langle v_4, 1 \times 0.5 \times 1 = 0.5 \rangle \}$ as $v_3$ has only one neighbor, i.e., $v_4$. Then, we update $prevBSize=|B(v_2)|=1$, and update $B(v_2)$ as the aggregation of items in $B(v_2)$ and $F'(v_2)$, resulting in the set $\{ \langle v_3, 1 \rangle,  \langle v_4, 0.5 \rangle \}$. In addition, $F(v_2)$ is updated as $F'(v_2)$, i.e., $F(v_2) = \{\langle v_4, 0.5 \rangle \}$. Afterwards, we continue to extend the walks in $F(v_2)$, since $|F(v_2)| + |B(v_2)| = 3 < d$, and we have $F'(v_2) = \{ \langle v_3, 0.5 \times 0.5 \times 1 = 0.25 \rangle \}$. Now, we update $prevBSize = |B(v_2)| = 2$, update $F(v_2) = F'(v_2)$, and update $B(v_2)$ as the aggregation of items in $F'(v_2)$ and $B(v_2)$, leading to the set $\{ \langle v_3, 1.25 \rangle, \langle v_4, 0.5 \rangle  \}$. Because $prevBSize = |B(v_2)|$, i.e., the size of $B(v_2)$ does not change any more, we stop the iteration, and compute the set $P^b_t(v_2)$ of terminated big moves as $\{ \langle v_3, 0, 1.25 \times 0.5 = 0.625 \rangle, \langle v_4, 0, 0.5 \times 0.5 = 0.25 \rangle \}$ and the set $P^b_a(v_2)$ of active big moves as $\{ \langle v_3, 1, 0.25 \times 0.5 = 0.125 \rangle \}$. As a result, we have the big moves of $v_2$ as the set $BM(v_2) = P^b_t(v_2) \cup P^b_a(v_2) = \{ \langle v_3, 0, 0.625 \rangle, \langle v_4, 0, 0.25 \rangle, \langle v_3, 1, 0.125 \rangle \}$.
\done
\end{example}

We show the correctness of Algorithm~\ref{alg:big-move} by Lemma~\ref{lemma:big-move}.

\begin{Lemma} \label{lemma:big-move}
Given a set $BM(v)$ of big moves of $v$, then $BM(v)$ consists of all random walks starting from $v$ within $k$ steps, where $k$ is the longest length of walks in $BM(v)$.
\end{Lemma}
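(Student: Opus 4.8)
The plan is to prove the lemma by induction on the number of iterations executed by the \textbf{while} loop of Algorithm~\ref{alg:big-move}. First I would introduce a convenient weight for paths: for a path $P = \langle v = u_0, u_1, \dots, u_m\rangle$ of $G$ with head $v$ and $\ell(P) = m$, let
\[
\rho(P) \;=\; \Bigl(\textstyle\prod_{i=1}^{m} r(u_{i-1},u_i)\Bigr)\,(1-\alpha)^{m-1},
\]
so that $\rho(P)$ is the probability that a random walk started at $v$ traverses $P$ during its first $m$ steps without terminating at any of $u_1,\dots,u_{m-1}$ (the quantity $p(\cdot,\cdot)$ used in the pseudocode is just the routing probability $r(\cdot,\cdot)$). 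Two consequences will be used at the end: $\alpha\,\rho(P)$ is the probability that a random walk from $v$ equals $P$ and stops at $u_m$, while $(1-\alpha)\,\rho(P)$ is the probability that it traverses $P$ and is still active at $u_m$ after $m$ steps; moreover $\rho(\langle v,u\rangle) = r(v,u)$, which is exactly the value stored in the initial $F(v)$ and $B(v)$.

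Next I would establish the loop invariant: after $j \ge 0$ completed iterations, for every node $t$ the total mass attached to $t$ in $F(v)$ (summed over all pairs in $F(v)$ with first component $t$) equals $\sum_{P}\rho(P)$ taken over all paths $P$ from $v$ to $t$ with $\ell(P) = j+1$, and the total mass attached to $t$ in $B(v)$ equals $\sum_{m=1}^{j+1}\sum_{P}\rho(P)$ taken over all paths $P$ from $v$ to $t$ with $\ell(P)\le j+1$. Stating the invariant per endpoint (rather than per pair) makes the occasional absence of an explicit aggregation step in the pseudocode irrelevant. The base case $j=0$ is the initialization together with $\rho(\langle v,u\rangle)=r(v,u)$. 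For the inductive step, the rule forming $F'(v)$ maps each $\langle u,p\rangle\in F(v)$ and each $t\in\outnbr(u)$ to $\langle t,\, p\,(1-\alpha)\,r(u,t)\rangle$; since extending a path ending at $u$ by the edge $\langle u,t\rangle$ multiplies $\rho$ by exactly $(1-\alpha)\,r(u,t)$, summing over all in-neighbours $u$ of $t$ converts the length-$(j+1)$ mass at $u$ into the length-$(j+2)$ mass at $t$, and the subsequent update of $B(v)$ adds precisely those new length-$(j+2)$ contributions. This one-line calculation is essentially all the content of the induction.

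Finally I would read off the statement. Let $J$ be the number of iterations actually performed and set $k = J+1$. By the invariant, $B(v)$ carries for each $t$ the mass $\sum_{m=1}^{k}\sum_{P:\,v\to t,\ \ell(P)=m}\rho(P)$ and $F(v)$ carries $\sum_{P:\,v\to t,\ \ell(P)=k}\rho(P)$; the last two lines of Algorithm~\ref{alg:big-move} multiply these by $\alpha$ and by $(1-\alpha)$ and tag them with the marks $0$ and $1$. Invoking the two consequences of $\rho$ above, $P^b_t(v)$ thus records, aggregated by endpoint, exactly the total probability of every random walk from $v$ of length at most $k$ that terminates, and $P^b_a(v)$ records exactly the total probability of every length-$k$ random-walk prefix from $v$ that has not yet terminated. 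Hence every random walk from $v$ within $k$ steps is represented in $BM(v)=P^b_t(v)\cup P^b_a(v)$ — completed inside $P^b_t(v)$, or still in progress at step $k$ inside $P^b_a(v)$ — nothing longer occurs, so $k$ is indeed the maximum walk length in $BM(v)$, and as a check the two groups of probabilities sum to $\Pr[\text{walk from }v\text{ terminates within }k\text{ steps}]+\Pr[\text{it does not}]=1$.

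The one step that demands care is the bookkeeping of the $(1-\alpha)$ exponent: there is no termination test before the first move, so a length-$m$ walk carries $(1-\alpha)^{m-1}$ inside $\rho$ and the trailing $\alpha$ or $(1-\alpha)$ is supplied only when $B(v)$ and $F(v)$ are converted to big moves. I would also check the two boundary ways the loop can terminate — it may execute zero iterations (giving $k=1$) when $2|\outnbr(v)|\ge d$, or it may stop because $|B(v)|$ has stabilized rather than because of the size bound — and confirm that the invariant, and therefore the conclusion, holds verbatim in both cases.
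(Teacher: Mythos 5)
Your proof is correct and follows essentially the same route as the paper's: both argue by induction over the iterations of Algorithm~\ref{alg:big-move} that every walk of length at most $k$ from $v$ is captured, and both verify that the terminated and active probability masses come to $1-(1-\alpha)^k$ and $(1-\alpha)^k$ respectively, summing to $1$. Your per-endpoint loop invariant is in fact a sharper packaging of the paper's two-part argument (which treats path completeness and the total-probability computation separately), and it makes explicit the $(1-\alpha)$-exponent bookkeeping and the two loop-exit cases that the paper's proof glosses over.
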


\begin{proof}
It is sufficient to show that (i) if there exists a path of length at most $k$ that starts from $v$ and ends at $u$, then $u$ must be in $BM(v)$; and (ii) $\sum_{\langle v, t_m, p \rangle \in BM(v)} p = 1$.

In the first case, we are to show that all paths, whose lengths are not larger than $k$, are in $BM(v)$. Let $P(v, u_t) = \langle v, u_1, u_2, \cdots, u_t \rangle$ be a path of length $t \leq k$ that starting from $v$. In the first round, we have $u_1$ selected, as $u_1$ is an out-neighbor of $v$, i.e., $u_1 \in \outnbr(v)$. By induction, $u_{i+1}$ is an out-neighbor of $u_i$, where $1 \leq i < t$. As such, the sequence $\langle v, u_1, u_2, \cdots, u_t \rangle$ is a path of $G$.

In the second case, we need to demonstrate that the probability space of $BM(v)$ is completed. Let $p_t$ be the probability that a random walk $P = \langle v, u_1, u_2, \cdots u_t \rangle$ is of length $t$. Hence, we have
\begin{align*}
p_t &= \sum_{\ell(P)=t} p( \langle v,u_1 \rangle ) p( \langle u_1,u_2 \rangle ) \cdots  p( \langle u_{t-1},u_t \rangle ) (1-\alpha)^{t-1} \\
&= (1-\alpha)^{t-1}.
\end{align*}
Consider the probability $p'$ of the event that a random walk $P \in BM(v)$ is a terminated path, which means that $P$ is of length $t \leq k$. As $BM(v)$ contains all paths of length not bigger than $k$, we have
\[
p' = \alpha \cdot \sum_{1 \leq t \leq k} p_t = \alpha \cdot \sum_{1 \leq t \leq k} (1-\alpha)^{t-1} = 1 - (1-\alpha)^k.
\]
Besides, consider the probability $p''$ of the event that a random walk $P \in BM(v)$ is an active path, which indicates that $P$ is of length $k$. Therefore, we have $p'' = (1-\alpha) \cdot p_k = (1-\alpha)^k$.
To sum up, we obtain $\sum_{\langle v, t_m, p \rangle \in BM(v)} p = p' + p'' = 1$, which completes the proof.
\end{proof}

To continue the generation of the random walk ending with the node $v$, which is a small node with the set $BM(v)$ of big moves, we randomly select a path $P^b = \langle v, t_m, p \rangle \in BM(v)$ following the distribution of probabilities of paths in $BM(v)$. Note that, the selection can be accomplished in constant time with the alias method. If $t_m = 0$, then $P^b$ is a terminated path, and we add $P^b$ into the set $R_t$. Otherwise, we continue the walk along the path $P^b$ until the termination.

\subsection{Optimizing \large{$\gamma$}} \label{sec:tune-gamma}

Our previous discussion has focused on the generation of random walks in the graph $G$. When the number $\omega$ of random walks is considerably large, it could be highly inefficient to generate all random walks for all nodes of $G$ at a time, as discussed in the previous sections. To remedy this deficiency, we proposed the parallel pipeline framework that processes the $\omega$ number of random walks by several iterations, each of which generates at most $\gamma$ random walks for each node of $G$. That is, there will be $\lceil \omega / \gamma \rceil$ number of pipelines.

One crucial question remains: how do we decide the number $\gamma$ of random walks in each pipeline? A straightforward approach is to set $\gamma=1$, i.e., we generate one random walk for each node $v \in V$ in each pipeline. Although this can avoid the memory issues as aforementioned, it brings up the other deficiencies where (i) the probability to having a long walk is low, which leads to the under-utilization of computing resources, and (ii) the number of pipelines would be a lot, which increases the running time. To tackle these issues, we devise a heuristic method to choose $\gamma$, as follows. First, in our implementation, we maintain a path $P$ by keeping the head $s$ and the tail $t$ of $P$, i.e., $\langle s,t \rangle$. As such, it is a constant space cost $c$ for a path $P$. Second, in the parallel pipeline framework, we start the pipelines sequentially such that one latter pipeline $L$ is started only upon that the pipeline $L'$, prior to $L$, completes the extension of random walks in $L'$ by one step. Specifically, let $s_i$ be the total number of active random walks that are generated by the first $i$ pipelines. After the extension of one step at the $i$-th round, the number of remaining random walks is expected to be $(1-\alpha) \cdot s_i$. In the $(i+1)$-th pipeline, we start another $n \cdot \gamma$ number of new random walks in total, where $n$ is the number of nodes of $G$. To sum up, the total expected number of random walks to be processed at the $(i+1)$-th round is $s_{i+1} = n \cdot \gamma + (1-\alpha) \cdot s_i$. Note that, at the beginning, we have $s_1 = n \cdot \gamma$. By solving these formulas, we have
\[
s_i = n \cdot \gamma \cdot \sum_{1 \leq j \leq i} (1-\alpha)^{j-1} = n \cdot \gamma \cdot \frac{1 - (1-\alpha)^i}{\alpha} < \frac{n \cdot \gamma}{\alpha}.
\]

As such, if the size of available memory on a machine is $M$, then we have $c \cdot n \cdot \gamma / \alpha \leq M$. That is, we obtain $\gamma \leq \alpha \cdot M / (n  \cdot c)$. Therefore, we can set $\gamma$ to the maximum integer such that the total size of processed random walks does not exceed the size of available memory, which helps the full utilization of the cluster. By experiments as shown later, we demonstrate that our choice of $\gamma$ results in good performance of the proposed algorithm.

\begin{table}[t]
\centering
\small
\caption{Datasets.}\label{tbl:exp-real-data}
\vspace{-2mm}
\begin{tabular}{| @{\;}c@{\;} | @{\;}r@{\;} | @{\;}r@{\;} | @{\;}r@{\;} | @{\;}r@{\;} | @{\;}r@{\;} | @{\;}r@{\;} |} \hline
Name& $|V|$ & $|E|$ & $d_{avg}$ & $d_{max}$ & $|V_{s}|/|V|$ & $|V_{l}|/|V|$ \\ \hline
{\it GrQc}	&5,241	&28,968	&5.5	&81	&78.3\%	&15.9\% \\ \hline
{\it CondMat}	&23,133	&186,878	&8.1	&279	&75.4\%	&12.2\% \\ \hline
{\it Enron}	&36,692	&367,662	&10.0	&1383	&85.5\%	&5.1\% \\ \hline
{\it DBLP}	&317,080	&1,049,866	&5.6	&306	&46.9\%	&4.2\% \\ \hline
{\it Stanford}	&281,903	&2,312,497	&8.2	&255	&75.5\%	&14.7\% \\ \hline
{\it Google}	&875,713	&5,105,039	&6.9	&456	&53.6\%	&1.9\% \\ \hline
{\it Skitter}	&1,696,415	&11,095,298	&11.5	&35,387	&51.0\%	&0.3\% \\ \hline
{\it Patents}	&3,774,768	&16,518,947	&7.9	&770	&38.5\%	&1.5\% \\ \hline
{\it Pokec}	&1,632,803	&30,622,564	&21.4	&8,763	&61.0\%	&2.6\% \\ \hline
{\it LiveJournal}	&4,846,609	&68,475,391	&15.9	&20,292	&66.5\%	&1.0\% \\ \hline
{\it Orkut}	&3,072,441	&117,185,083	&43.0	&33,007	&65.7\%	&3.3\% \\ \hline
{\it Twitter}	&41,652,230	&1,468,364,884	&36.6	&2,997,469	&87.9\%	&0.2\% \\ \hline
{\it Friendster}	&68,349,466	&2,586,147,869	&46.1	&5,214	&65.5\%	&12.8\% \\ \hline
\end{tabular}

\end{table}

\section{Experiments} \label{sec:experiments}

This section presents the thorough experimental studies on the performance of the proposed algorithm, compared with the state-of-the-art approach, on several real-life datasets.

\subsection{Experimental Settings}

We implement our distributed algorithm for the computation of FAPPR (dubbed as \distppr) in Scala on Spark 2.0\footnote{https://spark.apache.org/}, and compare it against two state-of-the-art distributed algorithms: {\doubling} \cite{bkd11} and the PPR library in GraphX\footnote{https://spark.apache.org/graphx/} (denoted as \gxppr). Note that, (i) {\doubling} is a MapReduce algorithm, which can be easily implemented on Spark, and (ii) {\gxppr} is a matrix-based method that iteratively computes with the adjacency matrix of the graph until the convergence of PPR.
All of our experiments are conducted on an in-house cluster consisting of up to $201$ machines, each of which runs CentOS, and has 16GB memory and 12 Intel Xeon Processor E5-2670 CPU cores. By default, we exploit $51$ machines to run the algorithms, where we have one machine as the master and the others as the workers in Spark.

We use $13$ real-life networks in our experiments, as shown in Table~\ref{tbl:exp-real-data}. These data are obtained from the SNAP collection\footnote{http://snap.stanford.edu/data/index.html}, and are intensively evaluated in the literatures \cite{srxz+17,zxxs+18}. Besides, these data are from various categories: {\it GrQc} and {\it CondMat} are collaboration networks; {\it Enron} is a communication network; {\it Standford} and {\it Google} are the web graphs; {\it Skitter} is a internet topology network; {\it Patents} and {\it DBLP} are the citation networks; {\it Pokec}, {\it LiveJournal}, {\it Orkut}, {\it Twitter}, and {\it Friendster} are social networks. Furthermore, these data have different sizes, ranged from thousands of edges to billions of edges. Their average out-degrees (denoted by $d_{avg}$) are relatively small, though their maximum out-degrees (denoted by $d_{max}$) could be significantly large that are more than 2 million. Let $V_s$ be the set of {\it small} nodes whose out-degrees are less than $d_{avg}$, and let $V_l$ be the set of {\it large} nodes whose out-degrees are more than $\sqrt{d_{max}}$. As shown in Table~\ref{tbl:exp-real-data}, the small nodes are the majorities in the graph, while the large nodes take only a small portion. Following previous work \cite{wdaj15}, we adopt the linear parameterization to assign each edge a positive weight for each graph.

In addition, according to previous work \cite{srxz+17,dbkt05}, we set the failure probability $p_f$ to $1/n$ where $n$ is the number of nodes in the graph, and set the default value of the relative accuracy guarantee $\epsilon$, the threshold $\delta$, and the termination probability $\alpha$ to $0.5$. Besides, if the algorithm does not terminate within 24 hours, we omit it from the experiments. We repeat each experiment 3 times and report the average reading of each approach.

\begin{figure*}[t!]
\centering
\begin{tabular}{c}
 \includegraphics[height=35mm]{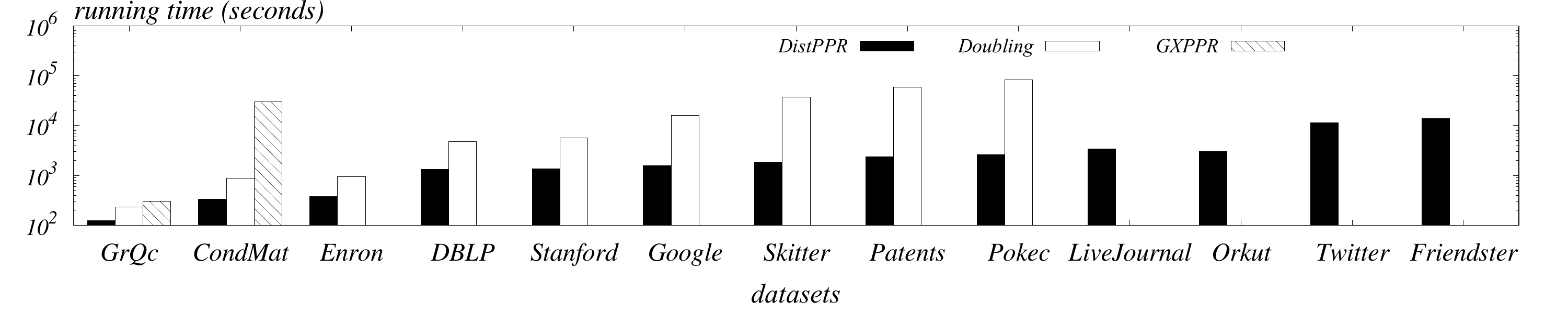} 
\end{tabular}
\vspace{-2mm}
\caption{Running time on real-life datasets.}
\label{fig:exp-efficiency-time}
\end{figure*}

\begin{figure*}[t!]
\centering
\begin{tabular}{c}
\hspace{-6mm} \includegraphics[height=36mm]{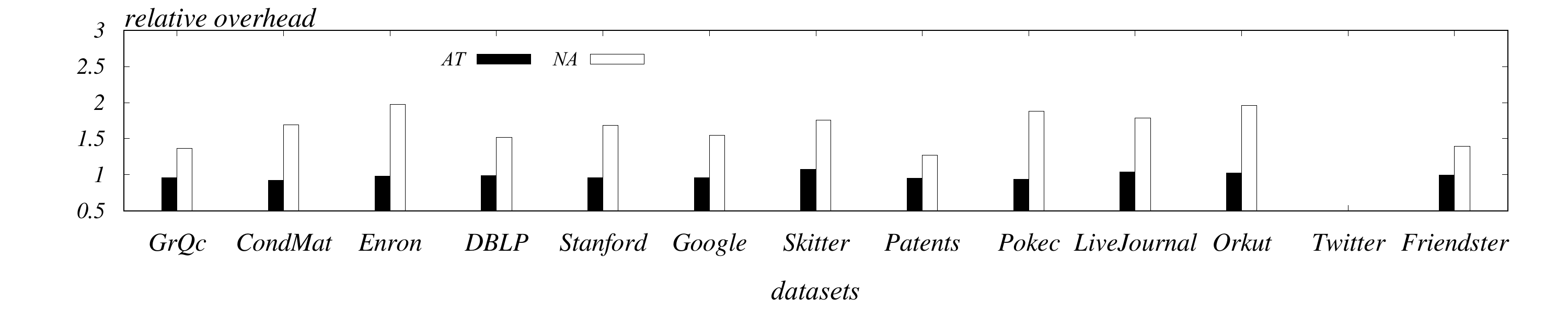} 
\end{tabular}
\vspace{-2mm}
\caption{Effects of optimizations.}
\label{fig:exp-efficiency}
\end{figure*}

\subsection{Comparisons with Previous Work}

In the first set of experiments, we demonstrate the superiority of our proposed algorithm by evaluating the running time of all the algorithms on all datasets. Note that, the running time of {\distppr} includes both the time of the optimization techniques, i.e., we compare the running time of all algorithms with the same input data. Figure~\ref{fig:exp-efficiency}(a) shows the results of \distppr, \doubling, and {\gxppr} on the datasets in Table~\ref{tbl:exp-real-data}. {\distppr} outperforms all the other methods by up to 2 orders of magnitude on all datasets. This is because {\distppr} addresses the issues of both large and small nodes and is able to leverage the massive parallelism of the distributed computing. However, {\doubling} spends lots of efforts on combining the short walks that results in a huge combinational space, and {\gxppr} incurs significant overhead in the operations of matrix which could be highly sparse. Besides, {\gxppr} is able to process only two small datasets, which again demonstrates that the matrix-based methods are difficult to handle large graphs that require huge memory space. Furthermore, the results of {\doubling} on the datasets, {\it LiveJournal}, {\it Orkut}, {\it Twitter}, and {\it Friendster}, are omitted, since the running time of {\doubling} on those datasets exceeds 24 hours. On the contrary, {\distppr} is able to handle all datasets with a relatively short running time. What's more, compared to {\doubling} and {\gxppr}, the performance of {\distppr} is less sensitive to the number of edges in the graph.

Besides, to evaluate the accuracy of the approximate solutions, we generate the ground-truth by the exact PPR algorithm \cite{wdaj15} that is able to process the graphs with less than ten millions edges on a machine with 48GB memory. Based on that, we evaluate the accuracy of {\distppr} and {\doubling} on 6 datasets by NDCG and MAP, which are widely used to evaluate the results of ranking\footnote{https://spark.apache.org/docs/2.3.0/mllib-evaluation-metrics.html\#ranking-systems}. In particular, given the graph $G=(V,E,w)$, for each node $v \in V$, we rank the nodes $u \in V$ by the value $\pi(v,u)$. We take the first $1000$ nodes from each ranking for evaluation. Figure~\ref{fig:exp-accuracy} shows the results of NDCG and MAP on several datasets for both {\distppr} and {\doubling} respectively. On all datasets, the NDCG of {\distppr} outperforms the one of {\doubling} by up to $21\%$, and the MAP of {\distppr} is also better than the one of {\doubling} by up to $19\%$. This is because {\doubling} does not guarantee the accuracy of results, whereas {\distppr} has a strong accuracy guarantee according to Definition~\ref{def:approx-ppr}.

\subsection{Effects of Optimizations}

In this section of experiments, we study the effects of the three proposed optimization techniques: the alias tree (AT) method for handling large nodes, the big move (BM) technique for accelerating the walks on small nodes, and the technique that auto-tunes $\gamma$.

We first evaluate the techniques of AT and BM. We consider two versions of algorithms modified from {\distppr}: {\distppr} but with the two optimizations disabled (denoted as {\it NA}), and {\distppr} but with only the AT enabled (denoted as {\it AT}). We define the {\it relative overhead} of each modified version of {\distppr} on a dataset $D$ as its running time of $D$ divided by the running time of {\distppr} with all optimizations enabled. Figure~\ref{fig:exp-efficiency}(b) shows the relative overheads of {\it AT} and {\it NA} on each dataset. Note that, the results on {\it Twitter} are not shown, as both {\it AT} and {\it NA} fail to process {\it Twitter} whose node degree is highly skew. The relative overheads of {\it AT} are around $1$ in all cases, which indicates that the AT technique allows the algorithm avoiding the issues of large nodes and brings very few overheads in the running time. Meanwhile, the relative overhead of {\it NA} is around $1.5$ on all datasets, implying that the BM technique improves the efficiency of {\distppr} by a factor of $1.5$.

\begin{figure*}[t]
\centering
\begin{tabular}{cc}
\hspace{-6mm} \includegraphics[height=41mm]{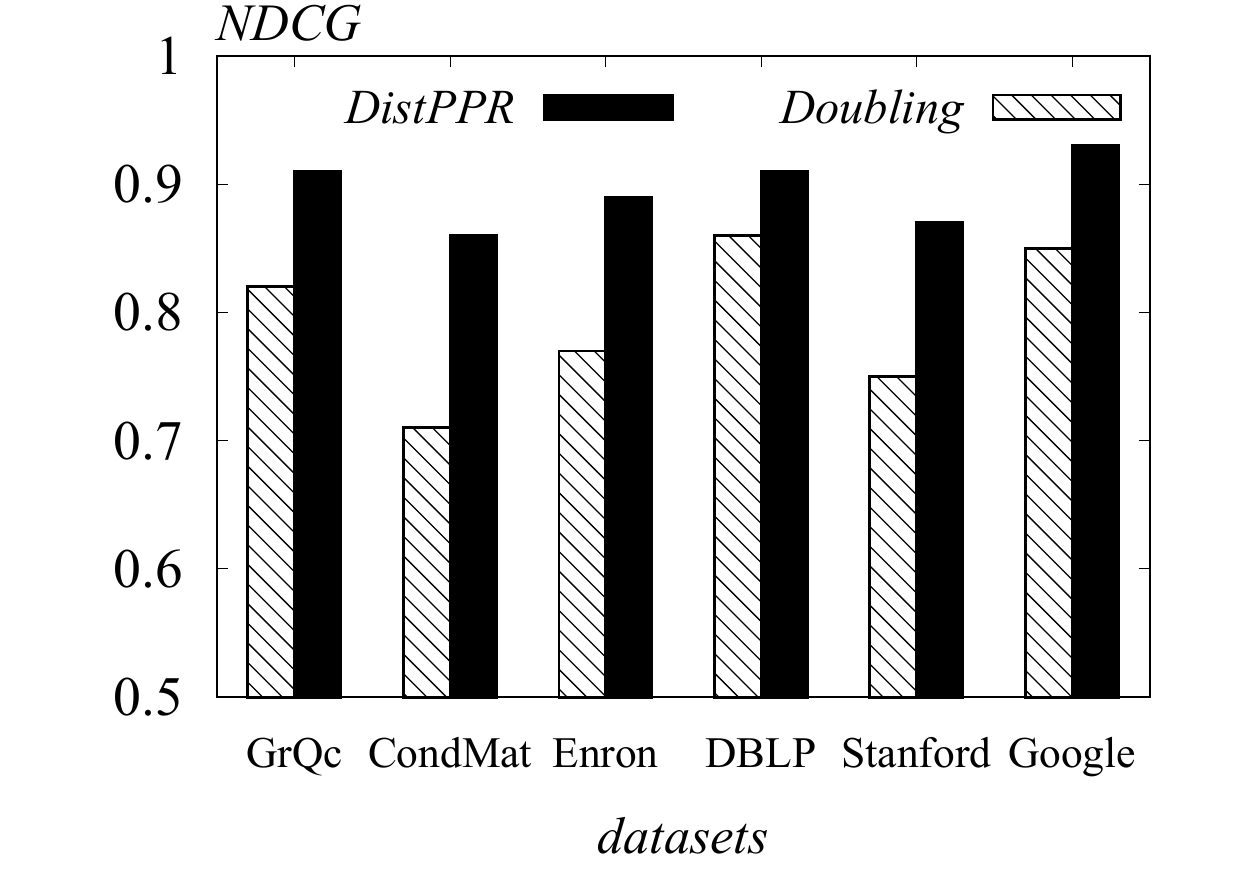}
& \hspace{-2mm} \includegraphics[height=41mm]{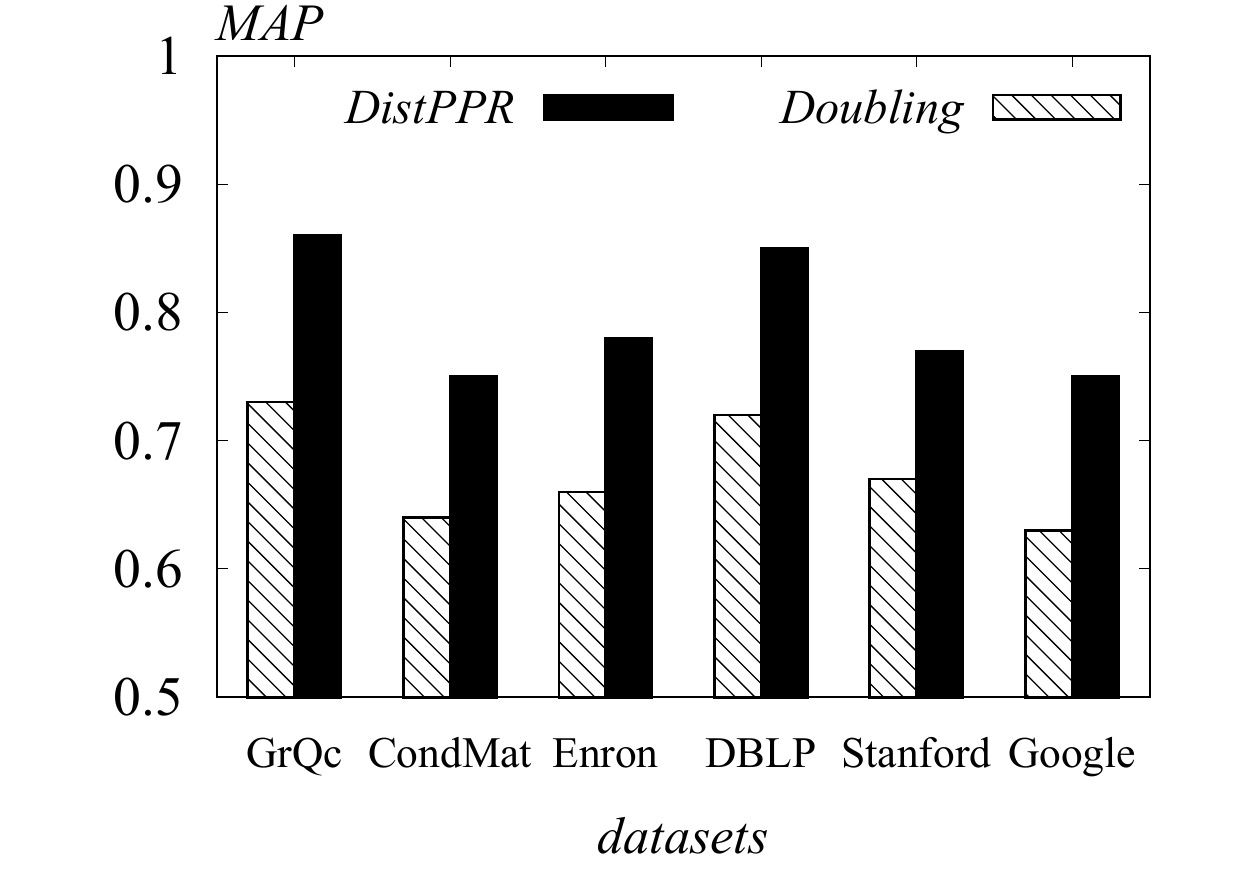}  \\
(a) NDCG. & (b) MAP. \\
\end{tabular}
\vspace{-3mm}
\caption{Accuracy comparison on NDCG and MAP.}
\label{fig:exp-accuracy}
\vspace{-1mm}
\end{figure*}

\begin{figure*}[t!]
\centering
\begin{tabular}{ccc}
\hspace{-10mm} \includegraphics[height=43mm]{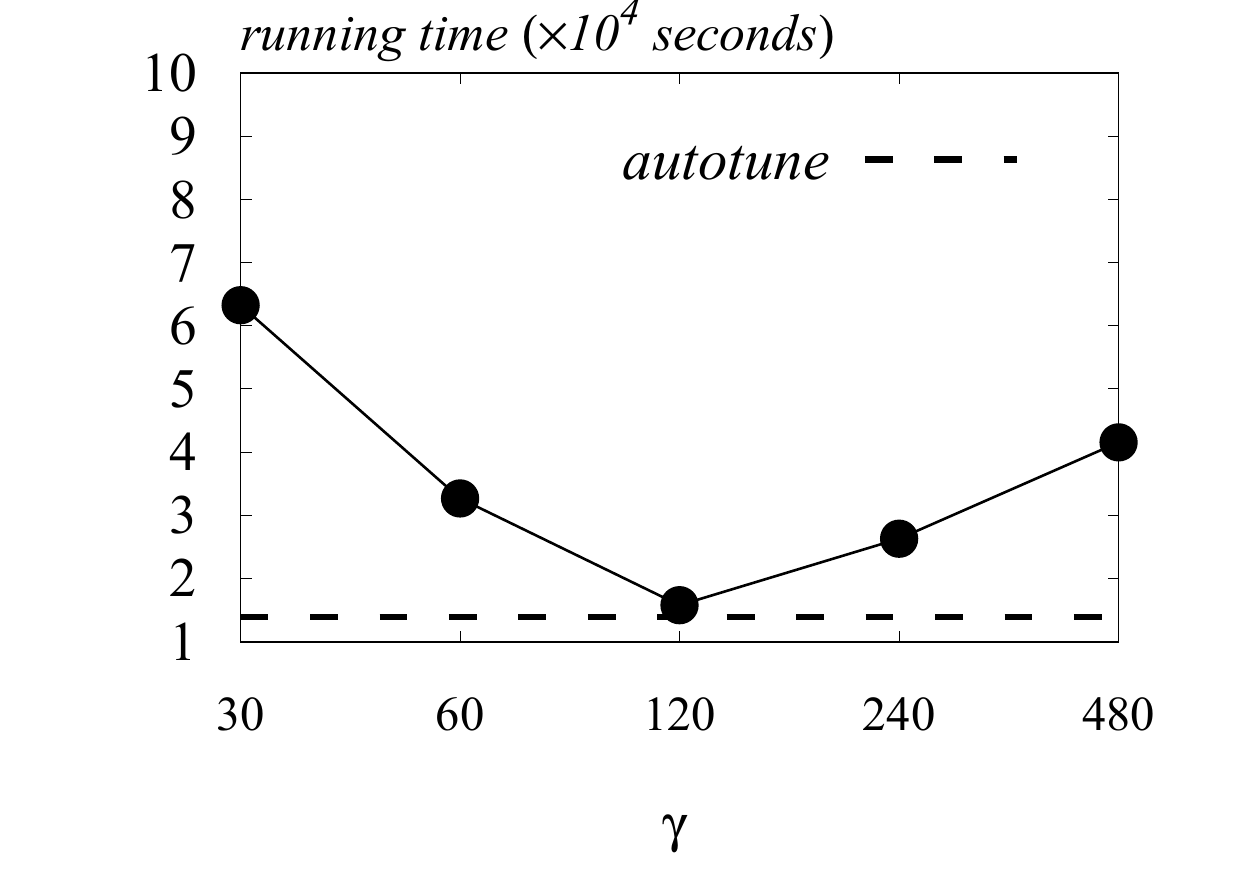}
& \hspace{-10mm} \includegraphics[height=43mm]{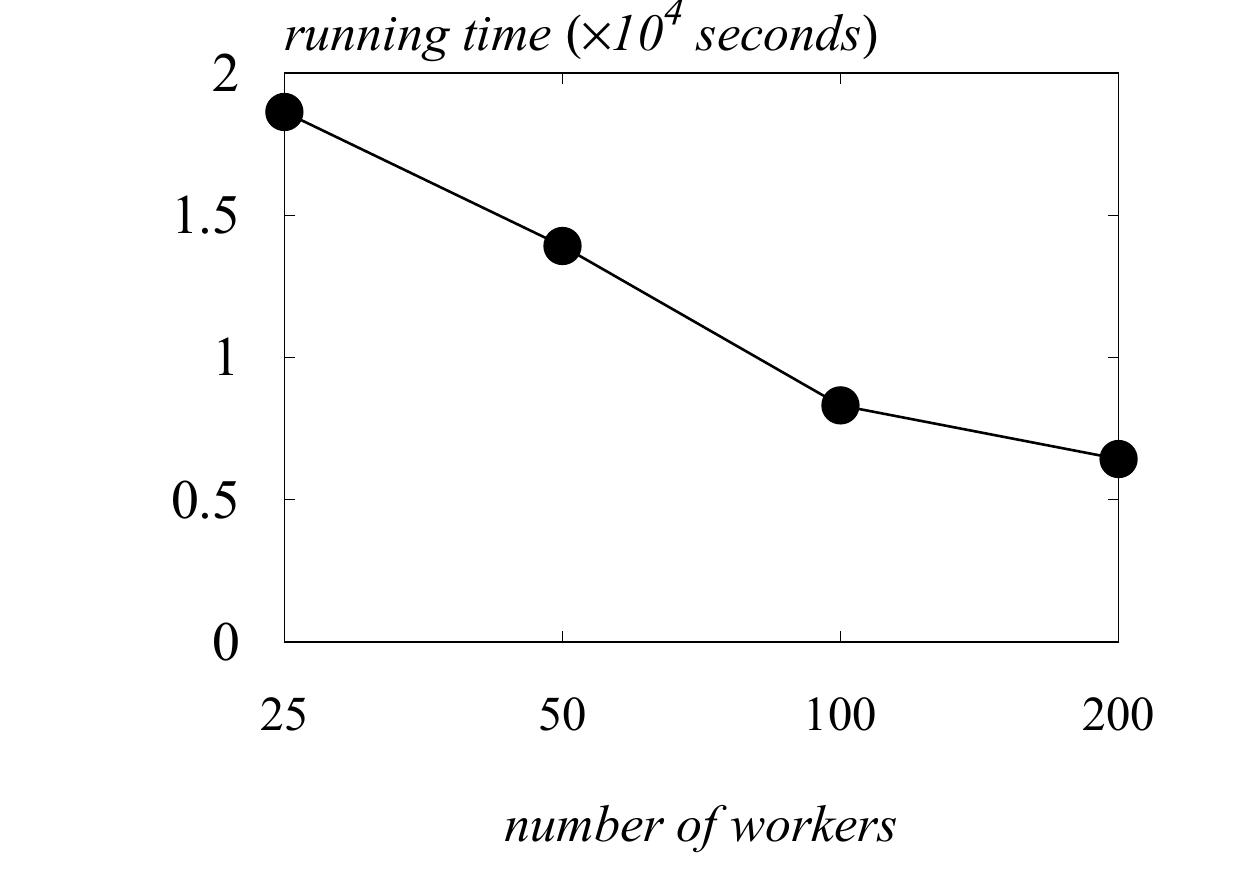}
& \hspace{-10mm} \includegraphics[height=43mm]{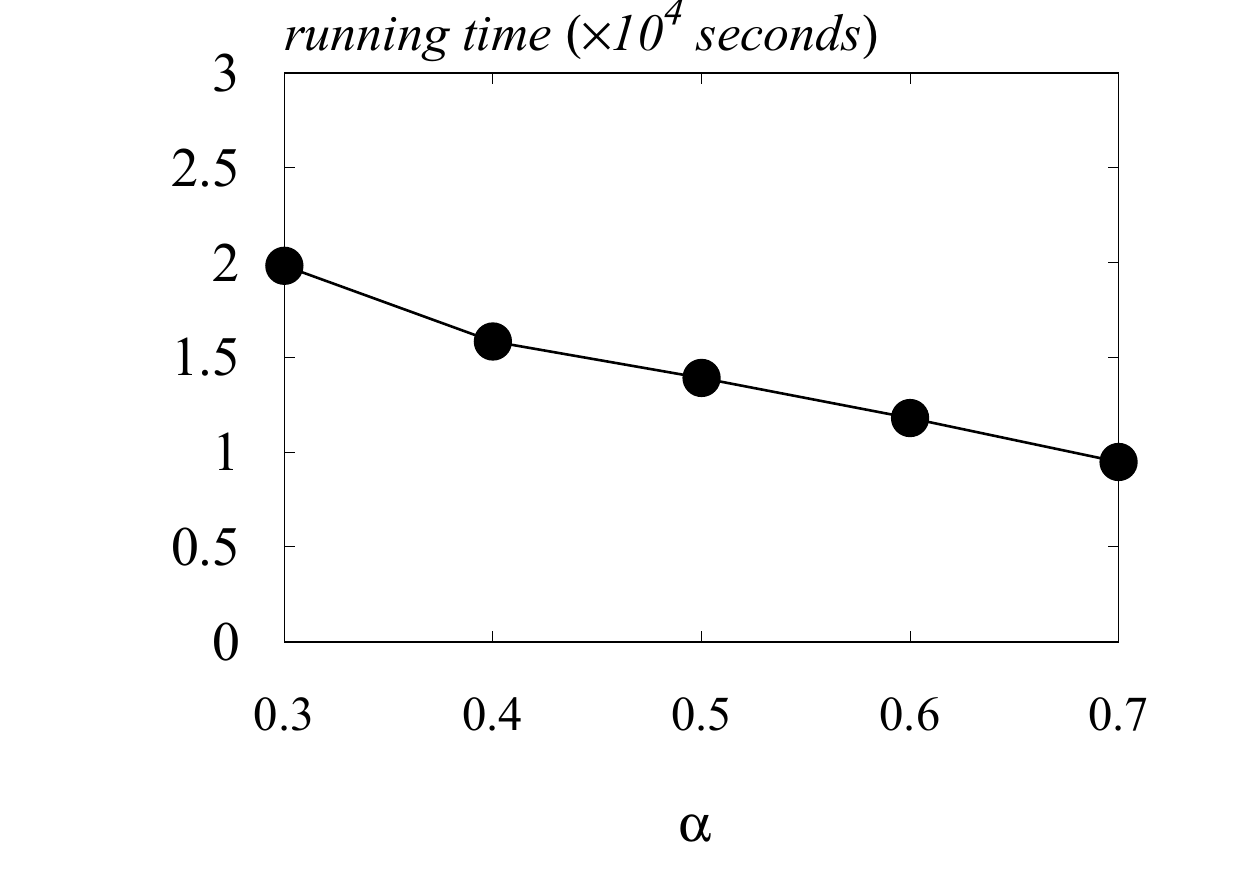}  \\
 (a) Varying $\gamma$. & (b) Varying the number of workers. &  (c) Varying $\alpha$.
\end{tabular}
\vspace{-2mm}
\caption{Varying parameters.}
\label{fig:exp-parameters}
\vspace{-3mm}
\end{figure*}

Then, we evaluate the performance of the technique that automatically tunes $\gamma$. Recall that $\gamma \leq \omega$ is the parameter in the parallel pipeline framework that decides the number of random walks to generate in a pipeline. To evaluate our choice of $\gamma$, we measure the running time of {\distppr} on the largest dataset {\it Friendster} by varying $\gamma$ from $30$ to $480$, and compare with our autotune technique. We plot the results in Figure~\ref{fig:exp-parameters}(a). Observe that the optimal value for $\gamma$ is around $120$, and our automatic choice of $\gamma$ leads to a performance identical to the optimum. On the contrary, compared to the optimum, when $\gamma \leq 120$, it leads to longer running time, since the number of pipelines increases. Note that, the number of pipelines is $\lceil \omega / \gamma \rceil$. On the other hand, when $\gamma \geq 240$, it overloads the cluster with an excessive number of random walks to process simultaneously, which degrades the performance of {\distppr}.

In summary, the three optimization techniques improve the efficiency of {\distppr} by up to 2 times, and help scale {\distppr} to large graphs whose node degrees could be highly skew.

\subsection{Varying Other Parameters}

In the last set of experiments, we study the sensitive of {\distppr} to the other parameters, i.e., the number of workers on Spark and the termination probability $\alpha$. Note that, we omit the studies on the threshold $\delta$, as $\delta$ has the same effect as $\alpha$ to the number $\omega$ of random walks. We adopt the largest dataset {\it Friendster} in all subsequent experiments.

Figure~\ref{fig:exp-parameters}(b) reports the running time of {\distppr} on {\it Friendster} dataset by varying the number of workers on Spark. When we increase the number of workers, the running time of {\distppr} decreases accordingly. However, when the number of workers ranges from $100$ to $200$, the decrease rate of {\distppr}'s running time becomes low, since the cost of communication between workers increases.

Figure~\ref{fig:exp-parameters}(c) shows the performance of {\distppr} by varying the termination probability $\alpha$. When $\alpha$ increases, it increases the probability of a random walk to be terminated at a certain step, leading to the decrease of the total amount of computation.

\section{Related Work} \label{sec:related}

There exist some work \cite{bkd11,aage15} that exploit the distributed computing techniques to accelerate the computation of the fully PPR or PR on large graphs based on the Monte Carlo approximation. They first generate some random walks of relatively short length, and then combine the short walks to obtain the random walks of long length. In particular, given two short walks $P_1 = \langle v_{h1} \cdots v_{t1} \rangle$ and $P_2 = \langle v_{h2} \cdots v_{t2} \rangle$, we can combine $P_1$ and $P_2$ if and only if the tail $v_{t1}$ of $P_1$ equals the head $v_{h2}$ of $P_2$, i.e., $v_{t1} = v_{h2}$, which results in a path $P = \langle v_{h1}, \cdots, v_{t1}, \cdots, v_{t2} \rangle$ of length $\ell(P_1) + \ell(P_2) - 1$. To make the combined paths sufficiently random, however, the number of short walks could be extremely large, especially for the large-degree nodes, rendering these approaches inefficient. Besides, there is no guarantee on the accuracy of the results produced by these approaches \cite{bkd11,aage15}.

Besides, some distributed algorithms are proposed to answer the queries of single-pair PPR \cite{qzjj16} or single-source PPR \cite{qzjj16,txgj+17,jnlu17}, which differ from the problem of this paper that focuses on the computation of fully PPR. Additionally, there are some distributed and parallel algorithms for graph data \cite{lxg14,lxxl17,hlsk+16}, but they do not aim for the problem of fully PPR computation.

On the other hand, Fogaras et al. \cite{dbkt05} developed an approach on a single machine for the fully PPR by the Monte Carlo approximation. In particular, to facilitate the traversal on the graph, their approach relies on an indexing structure of the graph that is required by the computation of random walks for all nodes, which renders this approach difficult to be implemented as a distributed algorithm. Moreover, there also exist a plethora of techniques for processing the queries of single-pair PPR \cite{psas14,syxy+16,psa16}, single-source PPR \cite{jkm13,hcj06,srxz+17}, and top-$k$ PPR \cite{zxxs+18} on a single machine, all of which have a different problem setting against the fully PPR. Besides, these techniques often require a preprocessing step, which are sequential algorithms, rendering them difficult to be translated in parallel.

Finally, Xie et al. \cite{wdaj15} addressed an important issue that edge-weighted graphs widely exist in various applications, which is ignored by previous studies. However, their approach requires to compute based on the adjacency matrix of the graph, which could be prohibitively large.

\section{Conclusions} \label{sec:conclusions}

This paper studies the problem of the fully Personalized PageRank, which has enormous applications in link prediction and recommendation systems for social networks. We approach this problem by devising an efficient distributed algorithm to accelerate the Monte Carlo approximation, which requires to generate a large number of random walks for each node of the graph. We exploit the parallel pipeline framework to achieve the superiority of cluster parallelism. To optimize the proposed algorithm, we develop three techniques that significantly improve the performance of the proposed algorithm in terms of efficiency and scalability. With extensive experiments on various real-life networks, we demonstrate that our proposed solution is up to $2$ orders of magnitude faster than the start-of-the-art solutions, and also outperforms the baseline solutions on the evaluation of accuracy.

\begin{balance}
\bibliographystyle{ACM-Reference-Format}
\bibliography{bigppr}
\end{balance}

\end{sloppy}
\end{document}